\newtheorem{lemma}{Lemma}
\newcommand{\Ref}[1]{Ref.~\cite{#1}}
\newcommand{\SEC}[2]{\section{\label{sec:#1}#2}}
\newcommand{\Sec}[1]{Sec.~\ref{sec:#1}}
\newcommand{\FIG}[2]{\caption{\label{fig:#1}#2}}
\newcommand{\Fig}[1]{Fig.~\ref{fig:#1}}
\newcommand{\EQ}[1]{\label{eq:#1}}
\newcommand{\Eq}[1]{Eq.~(\ref{eq:#1})}
\newcommand{\prm}[1]{#1^\prime}
\newcommand{\prmprm}[1]{#1^{\prime\prime}}
\newcommand{\pnt}[1]{\bm{#1}}
\newcommand{\avg}[1]{\langle#1\rangle}
\begin{document}

\title{Second-order fluctuation theory and time autocorrelation function for currents}
\date{\today}
\author{Roman Belousov}\email{belousov.roman@gmail.com}
\affiliation{The Rockefeller University, New York 10065, USA}
\author{E.G.D. Cohen}\email{egdc@mail.rockefeller.edu}
\affiliation{The Rockefeller University, New York 10065, USA}
\affiliation{Department of Physics and Astronomy, The University of Iowa, Iowa
City, Iowa 52242, USA}
\begin{abstract}
By using recent developments for the Langevin dynamics of spatially asymmetric
systems, we routinely generalize the Onsager-Machlup fluctuation theory of the
second order in time. In this form, it becomes applicable to fluctuating variables,
including hydrodynamic currents, in equilibrium as well as nonequilibrium steady
states. From the solution of the obtained stochastic equations we derive an analytical
expression for the time autocorrelation function of a general fluctuating quantity.
This theoretical result is then tested in a study of a shear flow by molecular
dynamics simulations. The proposed form of the time autocorrelation function yields
an excellent fit to our computational data for both equilibrium and nonequilibrium
steady states. Unlike the analogous result of the first-order Onsager-Machlup theory,
our expression correctly describes the short-time correlations. Its utility is
demonstrated in an application of the Green-Kubo formula for the transport coefficient.
Curiously, the normalized time autocorrelation function for the shear flow, which
only depends on the deterministic part of the fluctuation dynamics, appears independent
of the external shear force in the linear nonequilibrium regime.
\end{abstract}
\keywords{Time autocorrelation function; second-order Langevin equation; fluctuation theory; currents}
\maketitle

\SEC{intro}{Introduction}

The stochastic theory of fluctuations for physical systems in equilibrium, due to
Onsager and Machlup, was originally presented in two forms, Refs.~\cite{OM1953} and \cite{MO1953},
respectively, which were published together in the same journal issue. One of these
papers \cite{OM1953} describes a model based on the Langevin differential equation
of the first order in time, while the other \cite{MO1953} was concerned with its
extension to the second order in time. This latter model received relatively little
attention compared to the first-order theory, which was much broader disseminated
and is now included in classic and modern text books on Statistical Physics, e.g.,
Refs.~\cite[Chapter XII]{LLSt} or \cite[Chapter 2]{Attard}.

The first-order fluctuation theory was generalized also to nonequilibrium steady
states, e.g., Refs.~\cite{KSSH2015I,KSSH2015II,MorgadoQ2016,PRE2016II}. The principle
argument, on which Onsager and Machlup mainly relied in their papers, was the time
reversibility. The recent developments \cite{PRE2016,PRE2016II,Bonella2014,dalCengio2016},
however, suggest to focus on the spatial symmetry of fluctuating physical systems.
This allows to apply the Onsager-Machlup theory, originally restricted to physical
quantities invariant under the time reversal \cite{OM1953,MO1953}, also to currents.

According to the first-order Langevin dynamics, a time autocorrelation function $C_\alpha(t)$
of a fluctuating quantity $\alpha(t)$ is a decaying exponential \cite[Chapter XII]{LLSt}:
\begin{equation}\EQ{exp}
C_\alpha(t) \propto \exp(-\;\mathrm{const}\;|t|)\text{.}
\end{equation}
This analytical result agrees asymptotically with the long-time behavior of the correlations,
found in experiments and computer simulations of classical physical systems \cite[Chapter 2]{Attard}.
Nonetheless, the first-order theory describes {\it inaccurately} correlations at short
times, because the time derivative of \Eq{exp} is discontinuous at $t=0$, whereas
one observes a smooth behavior with $\dot{C}_\alpha(0) = 0$ \cite[Chapter 2]{Attard}.

This failure of the first-order theory to describe correlations at short times can
be attributed to one of its underlying assumptions. Without loss of generality, consider,
for instance, the fluctuations of an equilibrium system described by the ensemble
averages $\avg{\alpha(t)}=0,\avg{\dot{\alpha}(t)} = 0$, cf. \Ref{OM1953}. Suppose,
this system spontaneously fluctuates from an initial {\it complete} \cite[Chapter XII]{LLSt}
equilibrium state $\alpha(0)=0,\,\dot{\alpha}(0)=0$ to another state with $\alpha(t>0)=\alpha_0\ne0,\,\dot{\alpha}(t>0)\ne0$.
The first-order fluctuation theory regards $\dot{\alpha}(t)$ merely as a function\footnote{
One can formally define $\dot{\alpha}[\alpha(t)]$ by using a conditional ensemble
average of $\dot{\alpha}(t)$ at a fixed value $\alpha(t)$, cf. \cite{OnsagerI}.
} $\dot{\alpha}[\alpha(t)]$, i.e. entirely determined by $\alpha(t)$. Physically, this
assumption implies, that the relaxation time from a general transient state $\alpha_0,\,\dot{\alpha}\ne\dot{\alpha}[\alpha_0]$
to the {\it incomplete equilibrium} state $\alpha_0,\,\dot{\alpha}[\alpha_0]$ is
neglected, cf. \cite[Chapter XII]{LLSt}. Therefore, the described {\it quasistationary}
approach \cite[Chapter XII]{LLSt} does not allow to consider the fluctuation dynamics
at arbitrarily short time scales.

The second-order theory of Onsager and Machlup goes beyond the quasistationary approach,
by using two independent variables $\alpha(t)$ and $\dot{\alpha}(t)$ to specify completely
a system state. In the corresponding differential equation, which will be discussed
shortly, a change of state $\alpha(t),\,\dot{\alpha}(t)$ affects only the second
derivative $\ddot{\alpha}(t)$. The fluctuation dynamics then becomes inertial and
is capable of describing transient states, which were ignored in the first-order
theory. One should, therefore, expect that the second-order theory is applicable
to even smaller time scales, than those accessible to the quasistationary approach.

In this paper the original second-order theory of Onsager and Machlup \cite{MO1953}
is first generalized, as suggested in Ref.~\cite{PRE2016II}. In this new form it
becomes, in principle, applicable to both equilibrium and nonequilibrium steady-state
systems. In \Sec{thr} we solve the extended Langevin equation, thus obtained, for
$\alpha(t)$ and $\dot{\alpha}(t)$ and then derive an analytical expression for the
time autocorrelation function $C_\alpha(t)$, which is analogous to \Eq{exp}. Finally,
the usefulness of these theoretical results is demonstrated in an applied study of
shear flow correlations by means of computer simulations in \Sec{sim}.

The time autocorrelation function, which follows from the second-order fluctuation
theory, turns out to describe very accurately the results of our computer simulations
and has the following form:
\begin{equation}\EQ{adv}
C_\alpha(t) \propto \exp\left( -\frac{a |t|}{2} \right) \left[
\cosh\left( \frac{d |t|}{2} \right) + \frac{a}{d} \sinh\left( \frac{d |t|}{2} \right)
\right]\text{,}\\\end{equation}
where $a$ and $d$ are constants to be yet specified in \Sec{thr}. Agreement of \Eq{adv}
with our computational data is observed not only for the long-time behavior, which
remains exponential in character, but also for the short times. In particular, the
time derivative of \Eq{adv} is continuous at $t=0$ with the expected value $\dot{C}_\alpha(0)=0$.
This improvement over the quasistationary approach, as discussed earlier, can be
explained by the finer timescale resolution of the second-order fluctuation theory.

In \Sec{sim} we demonstrate, by using our equilibrium simulations, one practical
application of the analytical form, \Eq{adv}, for the current autocorrelation function.
Namely, we evaluate its time integral in a Green-Kubo formula for the shear viscosity
coefficient \cite[Chapter 7]{Haile}. In principle, this estimation method of the
transport coefficient is more accurate than the usually employed procedure of numerical
integration, as will be discussed.

Our computations show, that the parameters of the normalized current autocorrelation
function are effectively independent of the external shear rate in the linear nonequilibrium
regime. Their values agree with the ones found from our equilibrium simulations.
This is consistent with the fact, that the shear viscosity, which is constant in
the linear nonequilibrium regime, is related to the parameters of the normalized
current autocorrelation function, see \Sec{sim}.

\SEC{thr}{Theory}

The linear Langevin equation of second order in time, proposed by Onsager and Machlup \cite{MO1953}
for a fluctuating quantity $\alpha(t)$, can be expressed in the following general
form:
\begin{equation}\EQ{langevin}
\frac{d^2\alpha(t)}{dt^2} + a \frac{d\alpha(t)}{dt} + b^2 \alpha(t) = \epsilon(t)\text{,}
\end{equation}
where $a > 0$ and $b > 0$ are constants, while $\epsilon(t)$ is a random noise.

The left hand side (LHS) of \Eq{langevin} is analogous to the damped harmonic oscillator.
As it also will become clear from the solution of this equation later in this section,
the parameter $a$ is a friction-like coefficient, which ensures an exponential relaxation
to the macroscopically observable steady state $\avg{\alpha(t)}$ with $\avg{\dot{\alpha}(t)}=0$,
as well as a decay of correlations at long times. The potential-like term, proportional
to $b^2$, determines the resistance of the system to spontaneous fluctuations and
external forces, both due to the right hand side (RHS) of \Eq{langevin}. The constant
$b$, which is analogous to the frequency of the harmonic oscillator, also affects
the autocorrelation function at short times.

In the original theory of Onsager and Machlup for equilibrium systems, the stochastic
part of \Eq{langevin}, i.e. its RHS, which represents irregular spontaneous fluctuating
dynamics due to the ignored degrees of freedom, was assumed Gaussian. Several generalization
of $\epsilon(t)$ were recently announced \cite{KSSH2015II,MorgadoQ2016,PRE2016II}
for the nonequilibrium states, in order to incorporate the action of an external
force. Although in Appendices~\ref{sec:cmf} and \ref{sec:crr} we will treat a more
general case, in this section we follow \Ref{PRE2016II}, by considering a non-Gaussian
random noise of the form:
\begin{equation}\EQ{epsilon}
\epsilon(t) = A dW(t)/dt + B dE(t/\tau)/dt\text{,}
\end{equation}
where $A>0$ and $B\gtreqless0$ are constants, $dW(t)$ and $dE(t/\tau)$ are, respectively,
white noise and exponential noise with a timescale $\tau$, see \Ref{PRE2016II}.

In Equation~(\ref{eq:epsilon}) \cite{PRE2016II}, the constant $A$ is proportional
to the system's temperature, while the ratio $B/\tau$ is the average value of an
external nonequilibrium force. When $B=0$, \Eq{langevin} naturally reduces to the
equilibrium case with the Gaussian random noise, considered by Onsager and Machlup
in \Ref{MO1953}. The noise terms are defined as stochastic differentials of two random
processes: i) the Gaussian process $W(t)$ with a zero mean and a unit variance, and
ii) the Gamma process $E(t/\tau)$ with a timescale $\tau$ and a unit intensity.

In a steady-state the mean values of the time derivatives $\avg{\dot\alpha(t)}$ and
$\avg{\ddot\alpha(t)}$ must vanish by definition. Therefore by taking the appropriate
ensemble average on both sides of \Eq{langevin}, one can read off immediately the macroscopic
behavior of $\alpha(t)$, cf. \Ref{PRE2016II}:
\begin{equation}\EQ{avg}b^2 \avg{\alpha(t)} = B / \tau\text{,}\end{equation}
which implies that for the equilibrium case one has $\avg{\alpha(t)}=0$, while in
the nonequilibrium steady-state the parameter $b^2$ determines the system's response
to an external force $\avg{\alpha(t)} = B / (\tau b^2)$.

A formal solution of \Eq{langevin} for a general stochastic term $\epsilon(t)$ can
be found in Ref.~\cite[Sec.~II.3]{Chandrasekhar}. For that, in principle, one must
consider three cases of a discriminant $d^2 = a^2 - 4 b^2$: i) a periodic solution
$d^2 < 0$, ii) an aperiodic solution $d^2 = 0$, iii) and a nonperiodic solution $d^2 > 0$.
In this paper we consider only the last one\footnote{
A simple prescription, how to obtain the periodic and aperiodic solutions of \Eq{langevin},
can be found in Ref.~\cite[Sec.~II.3]{Chandrasekhar}.
}, because it is applied later in \Sec{sim} to our simulations. Below we cite, in
a more compact form, the nonperiodic solution of \Eq{langevin} from Ref.~\cite[Sec.~II.3]{Chandrasekhar}:
\begin{eqnarray}
\EQ{alpha}
\alpha(t) - \alpha(0) c(t) + \dot{\alpha}(0) \dot{c}(t) / b^2 &=& \int_0^t ds \phi(t-s) \epsilon(s)
\\ \EQ{dalpha}
\dot\alpha(t) - \alpha(0) \dot{c}(t) + \dot\alpha(0) \ddot{c}(t) / b^2 &=& \int_0^t ds \dot{\phi}(t-s) \epsilon(s)
\text{,}
\end{eqnarray}
where
\begin{widetext}
\begin{eqnarray}\EQ{aux}
c(t) &=& \exp \left(-\frac{a t}{2}\right) \left[
\cosh \left(\frac{\sqrt{a^2-4 b^2}}{2} t\right)
+ \frac{a}{\sqrt{a^2-4 b^2}} \sinh \left(\frac{\sqrt{a^2-4 b^2}}{2} t\right)
\right]
\\\EQ{phi}
\phi(t) &=& (a^2-4 b^2)^{-1/2} \left[
  \exp \left( \frac{-a + \sqrt{a^2-4 b^2}}{2} t \right)
  - \exp \left( \frac{-a - \sqrt{a^2-4 b^2}}{2} t \right)
\right]
\text{.}\end{eqnarray}
\end{widetext}

The problem, which remains to deal with in this paper, is to characterize the probability
distribution of the random variables, given by the right hand sides of Eqs.~(\ref{eq:alpha} and \ref{eq:dalpha}).
For the stochastic noise of the form \Eq{epsilon}, the probability density of these
variables apparently can not be expressed in terms of elementary functions. Nonetheless
a method, already shown in \Ref{PRE2016II}, allows to derive an integral representation
of their cumulant-generating functions\footnote{
A cumulant-generating function of a random variable is the Laplace transform of
its probability density.
} and, therefore, to compute analytically their statistical properties.

In Appendix~\ref{sec:cmf} the cumulant-generating function of $\alpha(t)$ is obtained,
as described just above. However, while discussing the time autocorrelation function
$C_\alpha(t)$, we are mainly concerned with the steady-state (SS) solution of \Eq{langevin}:
$$\alpha_\mathrm{SS} = \lim_{t\to\infty} \alpha(t)\text{,}$$
cumulants of which are also calculated in Appendix~\ref{sec:cmf}. Indeed, the time
autocorrelation function can be written as:
\begin{equation}\EQ{corr}
C_\alpha(t) = \avg{\alpha(0) \alpha(t)} - \avg{\alpha^2(t)} = \kappa_2(\alpha_\mathrm{SS}) c_\alpha(t)
\text{,}\end{equation}
where $\kappa_2(\alpha_\mathrm{SS})$ is the second cumulant of $\alpha_\mathrm{SS}$
or, in other words, its variance, and $c_\alpha(t)$ is the normalized time autocorrelation
function, which corresponds to the Pearson correlation coefficient in the statistical
terminology.

By inspecting Eq.~(\ref{eq:alpha}), one can already see that the time-dependent solution
$\alpha(t)$ contains a memory of its initial state $\alpha(0)$ and$\dot\alpha(0)$.
With time this deterministic contribution is vanishing, so that the stochastic part
due to the RHS of the equation, becomes progressively more dominant. How quickly
$\alpha(t)$ is forgetting its initial value $\alpha(0)$ is controlled by the function $c(t)$.
This observation suggests, that $c(t)$ is related to the time autocorrelation function
of $\alpha(t)$. In fact, in Appendix~\ref{sec:crr} we prove, that $c(t)$ is nothing
else but its correlation coefficient:
\begin{equation}\EQ{crr} c_\alpha(t) = c(t)\text{.}\end{equation}
Equation~(\ref{eq:adv}) follows from the above one due to the time-reversal symmetry
of the autocorrelations, with $d = \sqrt{a^2-4 b^2}$.

Note, that the stochastic part of the second-order Langevin equation determines only
the coefficient of proportionality $\kappa_2(\alpha_\mathrm{SS})$ between $C_\alpha(t)$
and $c_\alpha(t)$ in \Eq{corr}. The normalized autocorrelation function, cf. Eqs.~(\ref{eq:aux}
and \ref{eq:crr}), depends only on the constant parameters of the deterministic terms
in \Eq{langevin}, namely $a$ and $b$. This comes, perhaps, without a surprise, because
the correlation is a measure of mutual deterministic dependence between quantities,
$\alpha(t)$ and $\alpha(0)$ in this case. The stochastic term merely erases the connection
between them, so that $\alpha_\mathrm{SS}$ turns into a completely random variable.
Drawn from this, another conclusion is that the analytical expression of the normalized
autocorrelation function is the same for equilibrium and nonequilibrium cases, since
they differ only by the stochastic part of \Eq{langevin}.

\SEC{sim}{Simulations}

In this section we apply the theory, described in \Sec{thr}, to study time autocorrelations
of a shear flow by means of molecular dynamics simulations. Details of our computational
model can be found in Appendix~\ref{sec:cmp}. Here we just mention, that we consider
a thermostatted Weeks-Chandler-Andersen (WCA) fluid \cite{WCA} in three dimensions,
with a constant shear rate $\gamma$ maintained by the Lees-Edwards periodic boundary
conditions \cite[Chapter 6]{EvansMorriss}. All results are reported in reduced units,
cf. Appendix~\ref{sec:cmp}.

The off-diagonal components of the pressure tensor $P_{xy}$, $P_{yz}$ and $P_{zx}$
are the three fluctuating observables, which we measure. They represent the transverse
currents of the linear momentum, each obeying separately \Eq{langevin} by assumption.
In our simulations of nonequilibrium steady-states, the external shear force $\gamma > 0$
causes an average shear flow $\avg{P_{xy}} < 0$, so that
\begin{equation}\EQ{crel} \avg{P_{xy}} = - \eta \gamma \text{,} \end{equation}
where the transport coefficient $\eta$ is the shear viscosity.

\begin{figure}[t!]
\includegraphics[width=1\columnwidth]{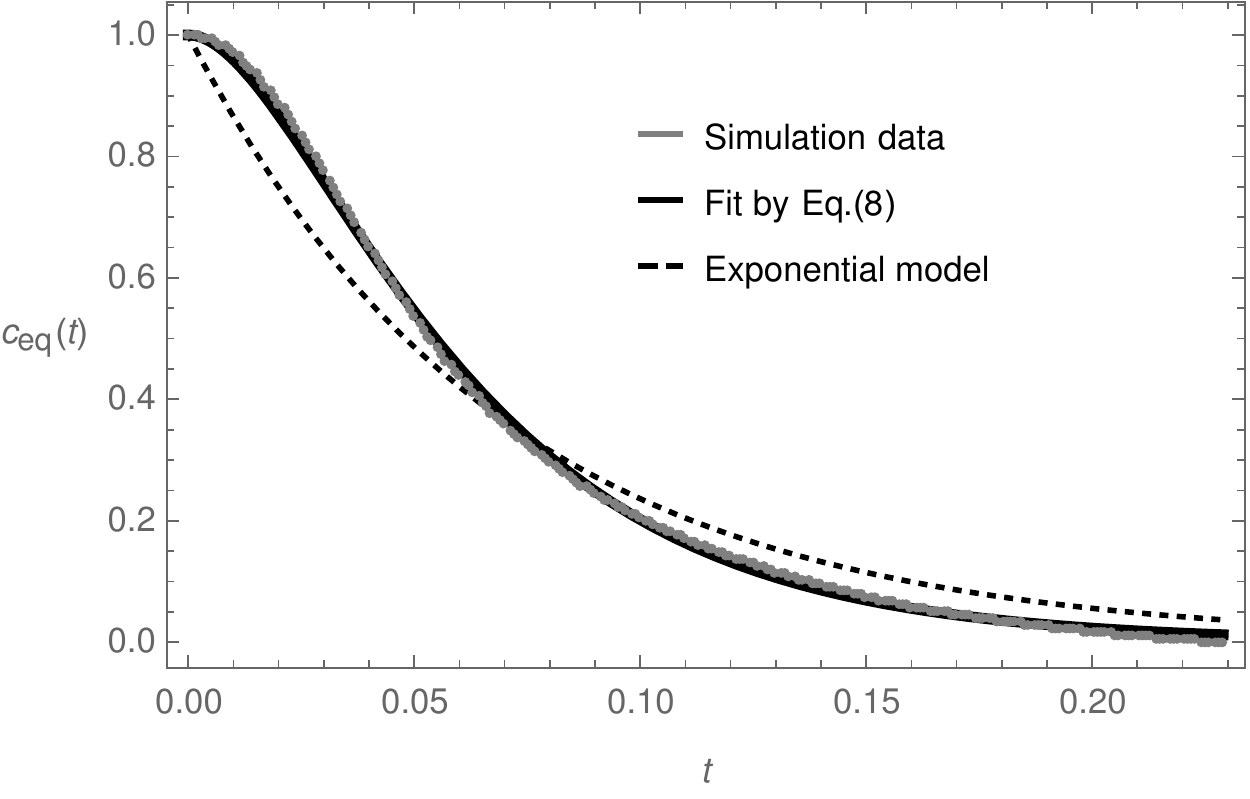}
\FIG{equ}{
Normalized time autocorrelation function $c_\mathrm{eq}(t)$ of shear flow fluctuations
in equilibrium: comparison of the molecular dynamics simulation data with their
fits by analytical models: i) \Eq{aux}, proposed in this paper, ii) the exponential
model of the first-order fluctuation theory \Eq{exp}.
}
\end{figure}

Let us begin, however, with the equilibrium simulations, i.e. $\gamma=0$. In the
absence of an external force, due to symmetry considerations, the statistical properties
of $P_{xy}(t)$, $P_{yz}(t)$ and $P_{zx}(t)$ coincide. In particular, their equilibrium
time autocorrelation functions are equal, respectively, $C_{xy}(t) = C_{yz}(t) = C_{zx}(t) = C_\mathrm{eq}(t)$.
This allows us to exploit more efficiently the statistics of measurements sample
$\{P_{xy}(t_i),P_{yz}(t_i),P_{zx}(t_i)\}_{i=0..n-1}$ ($t_0 = 0$) of a given size
$n$, as follows:
\begin{eqnarray}\EQ{eq}
C_\mathrm{eq}(t_i) &=& [C_{xy}(t_i) + C_{yz}(t_i) + C_{zx}(t_i)] / 3
\\\EQ{all}
C_{xy}(t_i) &\approx& \frac{1}{n-i} \sum_{j=0}^{n-i-1} P_{xy}(t_j) P_{xy}(t_{j+i})
\text{,}\end{eqnarray}
and similarly for $C_{yz}$ and $C_{zx}$. The corresponding equilibrium normalized time autocorrelation
function is
\begin{eqnarray}\EQ{ceq}
c_\mathrm{eq}(t) &=& C_\mathrm{eq}(t) / \kappa_2
\text{,}\end{eqnarray}
where $\kappa_2$ is the variance of the shear flow fluctuations in equilibrium, which
is identical for $P_{xy}$, $P_{yz}$ and $P_{zx}$. For a fixed $n$, the statistical
uncertainty of resulting $C_{xy}(t_i)$ is growing with $i$, since the number of terms
$n-i$ in \Eq{all} is then decreasing. Therefore we restrict the maximum considered
time of correlations $t_\mathrm{max}$ by the widely accepted rule of ``first zero'',
due to \Ref{LagarkovSergeev}.\footnote{
The maximum time $t_i$ does not exceed the first zero of the time autocorrelation
function, i.e. $t_\mathrm{max} < \inf \{t: C_\mathrm{eq}(t_i)=0\}$.
}

\begin{figure}[t!]
\includegraphics[width=1\columnwidth]{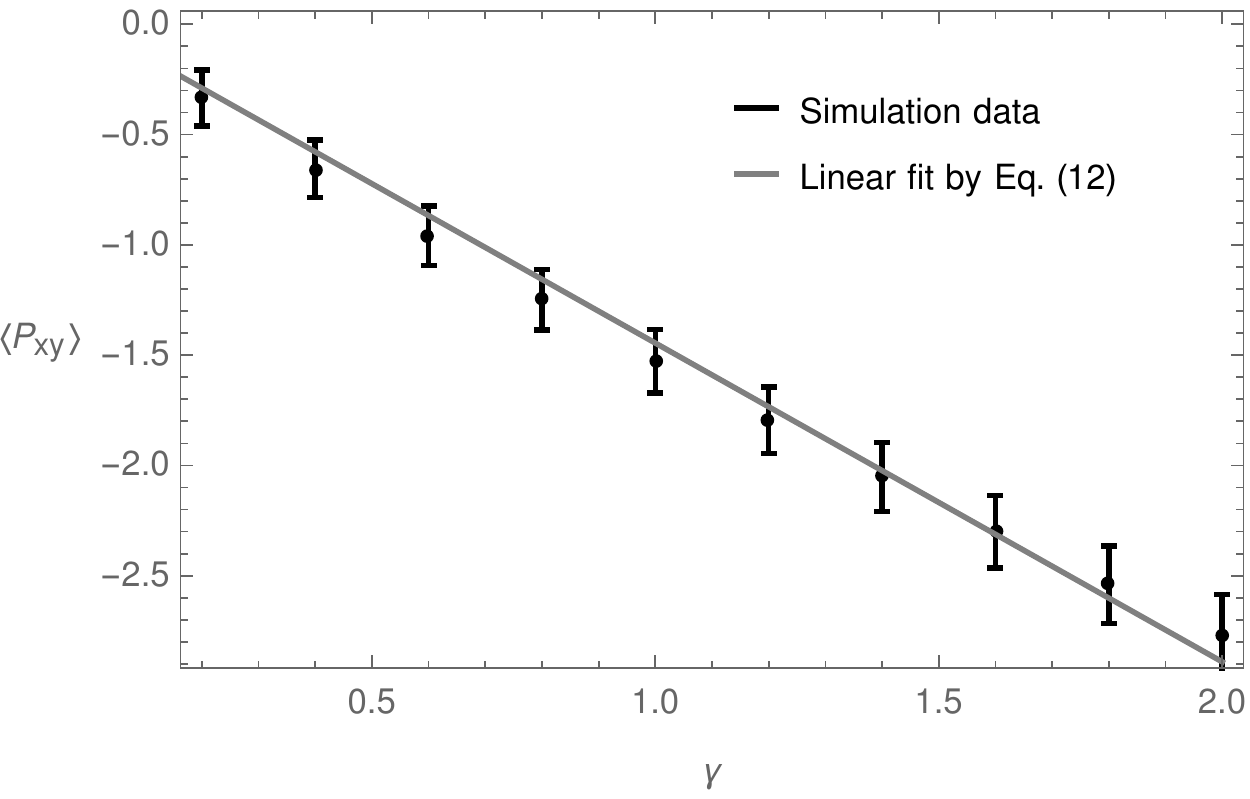}
\FIG{lin}{
Average current $\avg{P_{xy}}$, as a function of the externally applied shear rate
$\gamma$, for our nonequilibrium steady-state simulations. The error bars are given
by three standard deviations. A shear thinning, which corresponds to a decease
of the viscosity with $\gamma$, commonly observed for the WCA fluid, is statistically
insignificant and negligible in the given hydrodynamic regime. Therefore the linear
constitutive relation \Eq{crel}, with a constant viscosity $\eta$, renders a very
good fit to simulation data.
}
\end{figure}

In \Fig{equ} the normalized time autocorrelation function of the shear flow fluctuations,
observed in our equilibrium simulations, is compared with its least-squares fit by
\Eq{aux}. Our analytical model is in excellent agreement with the simulation data,
including the region of short times. The exponential model, \Eq{exp}, which is also
illustrated in \Fig{equ} for comparison, demonstrates the failure of the first order
fluctuation theory for $t\to0$, discussed in \Sec{intro}.

\begin{figure*}[t!]
\includegraphics[width=2\columnwidth]{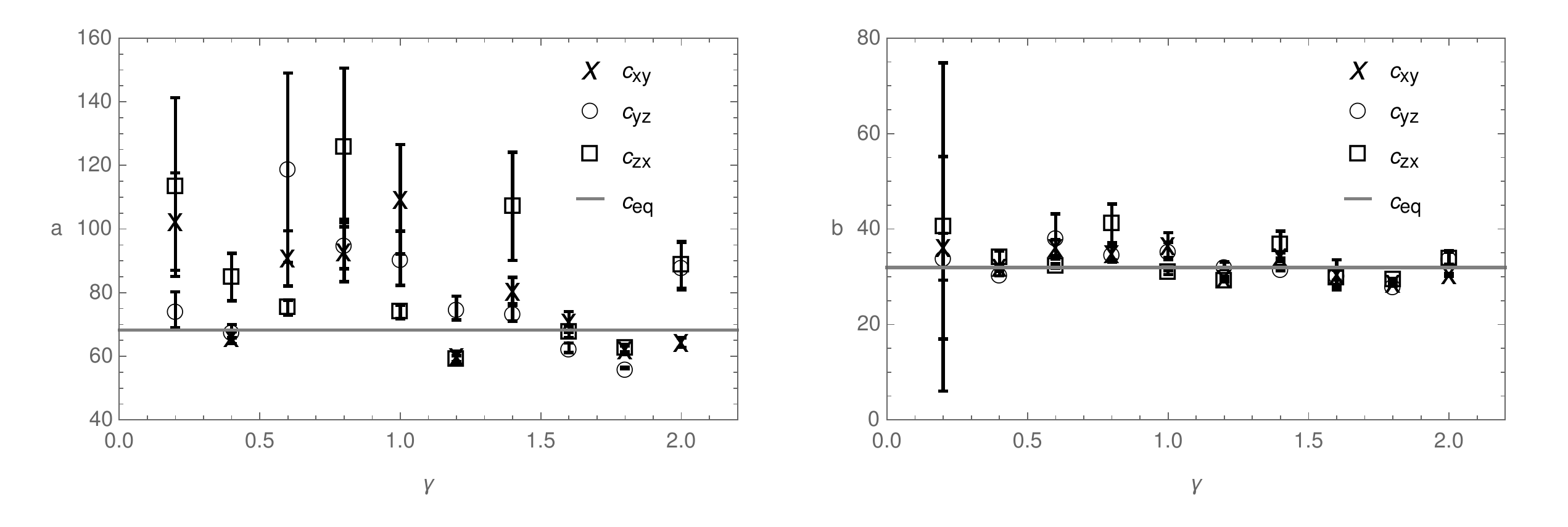}
\FIG{pars}{
Fitting parameters $a$ and $b$ of \Eq{aux} for normalized time autocorrelation
functions at various shear rates in our nonequilibrium steady-state simulations.
Values of the parameters for equilibrium simulations are drawn as solid horizontal
lines for comparison. Error bars are given by three standard deviations.
}
\end{figure*}

Provided that the shear viscosity in \Eq{crel} is independent of $\gamma$, the equilibrium
simulations allow to compute its value, $\eta$, via the Green-Kubo formula \cite[Chapter 6]{EvansMorriss}:
\begin{equation}\EQ{gk} \eta = \frac{V}{k_B T} \int_0^\infty ds C_\mathrm{eq}(t)\text{,} \end{equation}
which together with Eqs.~(\ref{eq:aux}, \ref{eq:corr}, and \ref{eq:crr}) yields
\begin{equation}\EQ{gk1} \eta = \frac{a \kappa_2 V}{k_B T b^2}\text{.}\end{equation}
Equation~(\ref{eq:gk1}) offers an alternative for a numerical approximation of the
integral in \Eq{gk} by a discrete sum:
\begin{equation}\EQ{gk2} \eta = \frac{V \Delta{t}}{k_B T} \sum_{i=0}^{t_\mathrm{max}/\Delta{t}} C_\mathrm{eq}(t_i)\text{,}\end{equation}
where $\Delta{t}$ is the time step between successive measurements.

Indeed, our nonequilibrium simulations show, that the linear regime of constant viscosity
spans a wide range of shear rates $\gamma \in (0, 2.0]$, as demonstrated by \Fig{lin}.
Table~\ref{tab:one} presents estimations of the shear viscosity, computed by various
methods from our simulation data. All results are in very good agreement with each
other. Equation~(\ref{eq:gk2}), though, has a larger statistical uncertainty than
\Eq{gk1}, because the latter interpolates the behavior of the time autocorrelation
function at successive time instants and, therefore, exploits more efficiently the
simulation data. This inference may become even more important, if the time step
$\Delta{t}$ is larger than in our study or the number of measurements $n$ is less.

\begin{table}[b!]
\caption{\label{tab:one} Shear viscosity estimations, computed by various methods from
simulation data.}
\begin{ruledtabular}
\begin{tabular}{ll}
Method & Shear viscosity ($\eta$) \\
\hline
Nonequilibrium simulations, \Eq{crel} & $1.445 \pm 0.020$ \\
Green-Kubo formula, \Eq{gk1} & $1.437 \pm 0.095$ \\
Green-Kubo formula, \Eq{gk2} & $1.44 \pm 0.34$ \\
\end{tabular}
\end{ruledtabular}
\end{table}

Now we turn our attention to the time autocorrelations of shear currents in nonequilibrium
steady states. Since the external force $\gamma$ introduces a preferred spatial direction,
the symmetry argument, which we used for \Eq{eq}, does not apply in this case. Therefore
the nonequilibrium time autocorrelation functions $C_{xy}(t|\gamma)$, $C_{yz}(t|\gamma)$
and $C_{zx}(t|\gamma)$ should be considered separately:
\begin{eqnarray}\EQ{ne}
C_{xy}(t_i|\gamma) &=& \kappa_2(P_{xy}) c_{xy}(t_i|\gamma) \nonumber\\
  &\approx& \frac{1}{n-i}
  \sum_{j=0}^{n-i-1} [
    P_{xy}(t_j) P_{xy}(t_{j+i}) - \avg{P_{xy}}^2
  ]\Bigr|_\gamma
\text{,}\nonumber\\
\end{eqnarray}
and similarly for $C_{yz}(t,\gamma)$ and $C_{zx}(t,\gamma)$, where $c_{xy}(t|\gamma)$
etc. stand for the normalized time autocorrelation functions.

Quite unexpectedly, we found that the parameters of our analytical model for the
normalized autocorrelation functions did not exhibit any notable dependence on the
shear rate. By inspecting \Fig{pars}, which presents results of fitting \Eq{aux}
to the simulation data, one observes no particular difference in the behavior of
the parameters $a$ and $b$ between the three autocorrelation functions of interest.
A remarkable aspect of these plots is that the points with smaller error bars are
all close to the solid horizontal lines, which represent the values of $a$ and $b$
for the equilibrium function $c_\mathrm{eq}(t)$. The large uncertainties of the data,
which occur in the upper part of the graph, suggest that the statistical errors are
biased and cause overestimation of the fitting parameters. This tendency can be explained
by the errors of calculated values $c_{xy}(t_i)$, which are non-identically distributed
and biased, see \cite[Sec. 8.14]{KenneyKeepingII}. Since no clearly visible trend
is observed in \Fig{pars} and the smaller errors of the estimations are close to
the solid lines, we conclude that $a$ and $b$ are practically independent of the
shear rate.

\begin{figure*}[pt!]
\includegraphics[width=2\columnwidth]{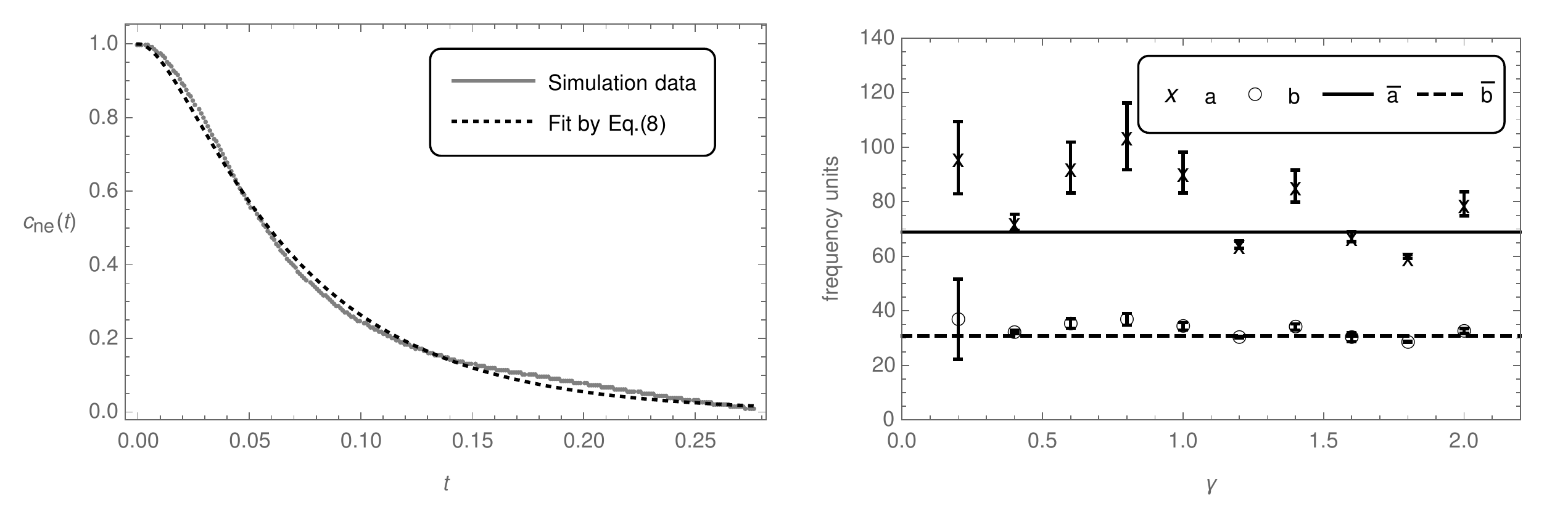}
\FIG{params}{
Fitting \Eq{aux} to the normalized autocorrelation function $c_\mathrm{ne}(t)$,
observed in our nonequilibrium simulations. Left panel: example of $c_\mathrm{ne}(t)$ for
one of our simulations at the shear rate $\gamma=1$. Right panel: the fitting parameters
$a$ and $b$ of \Eq{aux}, computed for our simulations at various shear rates; the
solid horizontal lines $\bar{a}$ and $\bar{b}$ are, respectively, the average values
of $a$ and $b$, weighted by their standard deviations; error bars are given by
three standard deviations.
}
\end{figure*}

The above observations suggest to use the following averaging procedure for the normalized
autocorrelation function of the nonequilibrium steady-state, independently of the shear rate:
\begin{equation}\EQ{ssc}
c_\mathrm{ne}(t_i) = \frac{1}{3} \left[ \frac{C_{xy}(t_i)}{\kappa_2(P_{xy})}
  + \frac{C_{yz}(t_i)}{\kappa_2(P_{yz})} + \frac{C_{zx}(t_i)}{\kappa_2(P_{zx})}
\right]\text{.}
\end{equation}
Although \Eq{ssc} is somewhat similar to \Eq{eq}, the statistical uncertainty of
the latter is much smaller for several reasons. First, the equilibrium autocorrelations
are calculated by taking into account explicitly, that the average current vanishes
$\avg{P_{xy}} = 0$, cf. Eqs.~(\ref{eq:all} and \ref{eq:ne}), while in the nonequilibrium
case one uses the same measurements to evaluate $\avg{P_{xy}}$ and then \Eq{ne}.
Second, measurements of $P_{xy}$, $P_{yz}$ and $P_{zx}$ are merged to compute $\kappa_2$
for the equilibrium fluctuations, whereas $\kappa_2(P_{xy})$ etc. are estimated separately.
Finally, the variance of nonequilibrium fluctuations is greater, cf. \Eq{kappas}
for $B\ne0$. Therefore, under otherwise equal conditions, \Eq{eq} is subject to tighter
statistical constraints, than \Eq{ssc}.

An example of $c_\mathrm{ne}(t)$, constructed for one of our nonequilibrium steady-state
simulations and fitted by \Eq{aux}, can be found in the left panel of \Fig{params}.
A quite good agreement between the computational data and the analytical formula
appears slightly worse, than it was in \Fig{equ} for the equilibrium simulations.
This is due to the larger statistical uncertainties of \Eq{ssc}, as explained above.

The right panel of \Fig{params} presents the fitting parameters of \Eq{aux} for $c_\mathrm{ne}(t)$
calculated for our nonequilibrium simulations at various shear rates. There is a
tendency to overestimation of $a$ and $b$ due to the same error bias, which was noted
earlier in \Fig{pars}. To mitigate this effect, we compute the average values of
the fitting parameters, weighted by their standard deviations:
\begin{equation}\EQ{wavg}\bar{a} = n_\gamma^{-1} \sum_i a(\gamma_i) / \Delta a(\gamma_i)\end{equation}
and similarly for $\bar{b}$, where $n_\gamma^{-1}$ is the number of estimations $a(\gamma_i)$
with the standard deviations $\Delta{a}(\gamma_i)$.

Table~{\ref{tab:two}} compares the final estimations of the parameters in our analytical
model of autocorrelation function \Eq{aux} for equilibrium and nonequilibrium simulations.
The results support our conclusion, that $a$ and $b$ are independent of the shear
rate. The estimations obtained from equilibrium and nonequilibrium data are very
close, although the latter are subject to a larger statistical uncertainty, as was
already discussed.

\begin{table}[b!]
\caption{\label{tab:two} Parameters of the autocorrelation function \Eq{aux}, determined
from our equilibrium and nonequilibrium simulations. The nonequilibrium estimations
are evaluated by the weighted averages, see \Eq{wavg}}
\begin{ruledtabular}
\begin{tabular}{lll}
& $a$ & $b$ \\
\hline
Equilibrium simulations, $c_\mathrm{eq}(t)$ & $68.28 \pm 0.76$ & $31.93 \pm 0.17$ \\
Nonequilibrium simulations, $c_\mathrm{ne}(t)$ & $69 \pm 13$ & $30.8 \pm 2.7$ \\
\end{tabular}
\end{ruledtabular}
\end{table}

Although the independence of the autocorrelation parameters $a$ and $b$ from the
external force is rather unexpected, it may be explained by the constant viscosity
coefficient. Indeed, the shear viscosity is related to $a$ and $b$ by several relations,
cf. Eqs.~(\ref{eq:avg},\ref{eq:crel}, and \ref{eq:gk1}). If the transport coefficient
is constant in the constitutive relation \Eq{crel}, either there must be some peculiar
relations between the constants of \Eq{langevin}, e.g. $b$ and $B/\tau$ due to Eqs.~(\ref{eq:avg} and \ref{eq:crel}),
either some of these constants must remain unaffected by the shear rate. The latter
case, which was observed in our simulations, also appears the more likely of the
two.

\SEC{fin}{Conclusion}

In \Sec{thr} we applied a recent extension of the Langevin equation \cite{PRE2016II},
to generalize the Onsager-Machlup fluctuation theory of the second order in time
for equilibrium and nonequilibrium steady states. A solution technique for this class
of stochastic dynamical problems was demonstrated in Appendix~\ref{sec:cmf}.

The analytical expression of the time autocorrelation function, \Eq{adv}, derived
in Appendix~\ref{sec:crr} for the second-order fluctuation theory, correctly describes
not only the exponential decay of the correlations, but also their smooth behavior
at short times. Our computational study of the hydrodynamic shear flow confirmed
that the generalized Onsager-Machlup theory is applicable to the current fluctuations.
In particular, plugged into the Green-Kubo formula, \Eq{adv} renders excellent results
for the transport coefficient.

In our simulations, we found that the normalized correlation function of shear flow,
specified by the deterministic part of the fluctuation dynamics, is practically independent
of the external shear rate in the linear nonequilibrium regime. This can be attributed
to the connection between the constant transport coefficient, shear viscosity in
our case, and the parameters of \Eq{adv}.

Finally, we would like to remark, that a second-order time derivative introduces
into the Langevin dynamics a dependence on the history of the fluctuating state variable.
This provides a link with the formalism of the memory function \cite[Chapter 4]{AdamoBelousovRondoni,BoonYip},
which is widely used, e.g. in the hydrodynamics. Therefore \Eq{adv} might find further
applications in this context.

\begin{acknowledgments}
One of the authors, Dr. Roman Belousov, is obliged to Dr. Alexei Bazavov for stimulating
discussions on operator splitting techniques for molecular dynamics simulations.
\end{acknowledgments}

\appendix
\SEC{cmf}{Statistics of the second-order fluctuation dynamics}

As described in \Sec{thr}, in order to complete the solution of \Eq{langevin},
which was stated formally by Eqs.~(\ref{eq:alpha} and \ref{eq:dalpha}), one needs
to specify the probability distribution of the following random processes $r(t)$
and $\dot{r}(t)$:
\begin{equation}\EQ{rdr}
r(t) = \int_0^t ds \phi(t-s) \epsilon(s); \quad \dot{r}(t) = \int_0^t ds \dot{\phi}(t-s) \epsilon(s)\text{,}
\end{equation}
where $\epsilon(t)$ is given by \Eq{epsilon}.

In this appendix we will deal with this problem in a slightly more general way than
in \Sec{thr}, so that our result will be valid not only for exponential noise, proposed
in \cite{PRE2016II}. In particular, our solution also applies to some variants of
shot noise \cite{KSSH2015II,MorgadoQ2016,PRE2016II}. More specifically, we will treat
a general stochastic process with stationary independent increments
\begin{equation}\EQ{dR} R(t) = \int_0^t ds \epsilon(s)\text{,}\end{equation}
such that its cumulant-generating function can be represented as:
\begin{equation}\EQ{rep}
\mathcal{C}_R(\tilde{R},t) = t f(\tilde{R})\text{.}
\end{equation}
Here $\tilde{R}$ is the reciprocal dual of $R$, while $f(\cdot)$ is a cumulant-generating
function of some random variable, which we call further an {\it elementary} cumulant-generating
function of $R(t)$. The representation \Eq{rep} applies to the Wiener process, the Gamma process, the
Poisson process or a sum composed of these,\footnote{
The representation \Eq{rep} of the stationary stochastic processes, mentioned in this paper,
is related to their property of {\it infinite divisibility} \cite{Steutel}.
} cf. \cite[Table I]{PRE2016II}.

From now onwards, reciprocal duals of random variables are denoted by tilde, like
we already used above $\tilde{R}$ for the dual of $R$. To proceed, we will need the
following lemma.

\begin{lemma}\label{lemma}
  Let $R$ and $\tilde{R}$ be, respectively, a real random variable and its reciprocal
  dual with a cumulant-generating function $\mathcal{C}_R(\tilde{R})$. Then a joint
  cumulant-generating function of random variables $\rho_1 = c_1 R$ and $\rho_2 = c_2 R$,
  where $c_1$ and $c_2$ are real constants, is given by
  \begin{equation}\EQ{lemma}
    \mathcal{C}_\rho(\tilde\rho_1, \tilde\rho_2) = \mathcal{C}_R(c_1 \tilde\rho_1 + c_2 \tilde\rho_2)
  \text{,}\end{equation}
  where $\tilde\rho_1$ and $\tilde\rho_2$ are reciprocal duals of $\rho_1$ and $\rho_2$,
  respectively.
\end{lemma}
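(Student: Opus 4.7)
The plan is to work directly from the definition of the cumulant-generating function as the logarithm of the Laplace transform of the (joint) probability density, using the same sign convention as in the rest of the paper (cf. the footnote following \Eq{corr}). Because $\rho_1$ and $\rho_2$ are both linear in the single underlying variable $R$, their joint law is degenerate --- it is concentrated on the line $c_2 \rho_1 = c_1 \rho_2$ in the $(\rho_1,\rho_2)$-plane --- so the joint expectation collapses to a one-dimensional expectation over $R$, which is exactly what the Lemma asserts.

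First I would write the joint cumulant-generating function as
\begin{equation}
\mathcal{C}_\rho(\tilde\rho_1,\tilde\rho_2) = \log \avg{\exp(\tilde\rho_1 \rho_1 + \tilde\rho_2 \rho_2)}\text{.}
\end{equation}
Next I would substitute $\rho_1 = c_1 R$ and $\rho_2 = c_2 R$ inside the exponential and factor out $R$:
\begin{equation}
\tilde\rho_1 \rho_1 + \tilde\rho_2 \rho_2 = (c_1 \tilde\rho_1 + c_2 \tilde\rho_2)\, R\text{.}
\end{equation}
The joint average then reduces to a single-variable average over $R$,
\begin{equation}
\mathcal{C}_\rho(\tilde\rho_1,\tilde\rho_2) = \log \avg{\exp[(c_1 \tilde\rho_1 + c_2 \tilde\rho_2) R]}\text{,}
\end{equation}
and the right-hand side is, by the very definition of $\mathcal{C}_R$, equal to $\mathcal{C}_R(c_1 \tilde\rho_1 + c_2 \tilde\rho_2)$. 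This gives \Eq{lemma}.

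There is essentially no obstacle: the entire argument is a change of the dummy dual variable in a one-dimensional Laplace transform. The only conceptual point worth flagging is that one should not try to argue at the level of a joint probability density of $(\rho_1,\rho_2)$, which is singular because of the deterministic linear relation between the two components; working at the level of the cumulant-generating function (i.e.\ the Laplace transform) sidesteps this issue cleanly and is precisely the reason why this linear-combination identity is so convenient for the calculations in Appendices \ref{sec:cmf} and \ref{sec:crr}, where $\rho_1$ and $\rho_2$ will play the roles of the two stochastic integrals in \Eq{rdr} that both derive from the common noise increment $\epsilon(s)\,ds$.
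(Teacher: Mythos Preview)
Your argument is correct and reaches the same conclusion by the same underlying mechanism: the bilinear form $\tilde\rho_1\rho_1+\tilde\rho_2\rho_2$ collapses to $(c_1\tilde\rho_1+c_2\tilde\rho_2)R$, so the two-variable Laplace transform is really a one-variable one. The paper's proof, however, does precisely what you caution against: it \emph{does} work at the level of the joint probability density, introducing an auxiliary copy $R'=R$ and writing $p(R,R')=p(R)\,\delta(R-R')$, then changing variables to $(\rho_1,\rho_2)$ and integrating out the delta function inside the double Laplace integral. That route is perfectly legitimate once the delta function is in place, but your expectation-level computation is shorter and sidesteps the singular density entirely. The trade-off is that the paper's version makes the degenerate structure of the joint law explicit, which may help the reader see why the later joint objects $\mathcal{C}_{r\dot r}$ in \Eq{cmft} are well defined despite $r$ and $\dot r$ sharing the same noise increments.
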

\begin{proof}[Proof]
  Let us introduce an auxiliary random variable $\prm{R} = R$. Then a joint probability
  density of $R$ and $\prm{R}$ is $p(R, \prm{R}) = p(R) \delta(R-\prm{R})$, where
  $p(R)$ is the probability density of $R$ and $\delta(\cdot)$ is the Dirac delta-function.
  The joint probability measure of $\rho_1$ and $\rho_2$ is
  $$p_\rho(\rho_1,\rho_2) d\rho_1 d\rho_2 = p(\rho_1/c_1, \rho_2/c_2) \frac{d\rho_1}{c_1} \frac{d\rho_2}{c_2}\text{,}$$
  whence we have:
  \begin{eqnarray}\EQ{proof}
    \mathcal{C}_\rho(\tilde\rho_1, \tilde\rho_2)
      &=& \ln \iint d\rho_1 d\rho_2 \exp(\rho_1 \tilde\rho_1 + \rho_2 \tilde\rho_2) p_\rho(\rho_1, \rho_2)
    \nonumber\\
      = \ln \iint &dR& \prm{dR} \exp(c_1 R \tilde\rho_1 + c_2 \prm{R} \tilde\rho_2) p(R) \delta(R - \prm{R})
    \nonumber\\
      = \ln \iint &dR& \exp[R (c_1 \tilde\rho_1 + c_2 \tilde\rho_2)] p(R)
    \nonumber\\
      &=& \mathcal{C}_R(c_1 \tilde\rho_1 + c_2 \tilde\rho_2)
  \text{,}\end{eqnarray}
  which proves the lemma.
\end{proof}

Now we express $r(t)$ and $\dot{r}(t)$ from \Eq{rdr}, as limits of the following
discrete sums, $s$ and $\dot{s}_n$, respectively, by partitioning the domain $[0,t]$
into $n$ subintervals of length $\Delta{t}$ so that $n \Delta{t} = t$:
\begin{eqnarray}\EQ{sum}
S(n) &=& \sum_{j=0}^{n-1} \phi(t-j\Delta{t})
  \int_{j\Delta{t}}^{(j+1)\Delta{t}} ds \epsilon(s) \nonumber\\
  &=& \sum_{j=0}^{n-1} \phi(t-j\Delta{t}) R(\Delta{t})
  = \sum_{j=0}^{n-1} R_j \underset{n\to\infty}{\to} r(t)
\text{,}\end{eqnarray}
where we used \Eq{dR}, and similarly,
\begin{equation}\EQ{dsum}
\dot{S}(n) = \sum_{j=0}^{n-1} \dot{\phi}(t-j\Delta{t}) R(\Delta{t}) = \sum_{j=0}^{n-1} \dot{R}_j \underset{n\to\infty}{\to} \dot{r}(t)\text{.}
\end{equation}

By assumption, $R(t)$ has independent stationary increments. Therefore the terms
$R_j$ in \Eq{sum} are mutually independent. The same argument also applies to $\dot{R}_j$
in \Eq{dsum}. Consequently, the joint cumulant-generating function $\mathcal{C}_{S\dot{S}}(\cdot, \cdot, n)$
for $S$ and $\dot{S}$ equals the sum of the joint cumulant-generating functions of $R_j$ and $\dot{R}_j$,
$\mathcal{C}_j(\cdot,\cdot)$. By applying Lemma~\ref{lemma} to the latter
and using the representation \Eq{rep}, this yields:
\begin{eqnarray}\EQ{cmft}
\mathcal{C}_{S\dot{S}}(\tilde{S}, \tilde{\dot{S}}, n) &=& \sum_{j=0}^{n-1} \mathcal{C}_j(\tilde{S},\tilde{\dot{S}})
  \nonumber\\
  &=& \sum_{j=0}^{n-1} \mathcal{C}_R[\phi(t-j\Delta{t}) \tilde{S} + \dot\phi(t-j\Delta{t}) \tilde{\dot{S}}, \Delta{t}]
  \nonumber\\
  &\underset{n\to\infty}{\to}&
  \int_0^t ds f[\phi(t-s) \tilde{r} + \dot\phi(t-s) \tilde{\dot{r}}]
  \nonumber\\
  &=& \mathcal{C}_{r\dot{r}}(\tilde{r},\tilde{\dot{r}},t)\text{,}
\end{eqnarray}
which is the joint cumulant-generating function of $r(t)$ and $\dot{r}(t)$. By repeating
the above derivation for the marginal distributions of $r(t)$ and $\dot{r}(t)$, one can
obtain separately their cumulant-generating functions:
\begin{eqnarray}\EQ{cmfr}
\mathcal{C}_r(\tilde{r},t) &=& \int_0^t ds f[\phi(t-s) \tilde{r}]
\\\EQ{cmfdr}
\mathcal{C}_{\dot{r}}(\tilde{\dot{r}},t) &=& \int_0^t ds f[\dot\phi(t-s) \tilde{\dot{r}}]
\text{.}\end{eqnarray}

The integral representations Eqs.~(\ref{eq:cmft}-\ref{eq:cmfdr}) allow to compute
analytically the cumulants of $r$ and $\dot{r}$, once the elementary cumulant-generating
function $f$ is specified. Since in \Sec{thr} and Appendix~\ref{sec:crr} we are mainly
interested in the steady-state solution of \Eq{langevin}, $\alpha_\mathrm{SS}$ and
$\dot\alpha_\mathrm{SS}$, below we calculate their cumulants $\kappa_i,\,i=1..3$:
\begin{eqnarray}\EQ{cum}
\kappa_1(\alpha_\mathrm{SS}) &=& \lim_{t\to\infty} \frac{\partial \mathcal{C}_r}{\partial r}(0,t) = \frac{\prm{f}(0)}{b^2}
\nonumber\\
\kappa_2(\alpha_\mathrm{SS}) &=& \lim_{t\to\infty} \frac{\partial^2 \mathcal{C}_r}{\partial r^2}(0,t) = \frac{\prmprm{f}(0)}{2 a b^2}
\nonumber\\
\kappa_3(\alpha_\mathrm{SS}) &=& \lim_{t\to\infty} \frac{\partial^3 \mathcal{C}_r}{\partial r^3}(0,t) = \frac{2 f^{(3)}(0)}{3 b^2 (2 a^2+b^2)}
\nonumber\\
\kappa_1(\dot\alpha_\mathrm{SS}) &=& \lim_{t\to\infty} \frac{\partial \mathcal{C}_{\dot{r}}}{\partial r}(0,t) = 0
\nonumber\\
\kappa_2(\dot\alpha_\mathrm{SS}) &=& \lim_{t\to\infty} \frac{\partial^2 \mathcal{C}_{\dot{r}}}{\partial r^2}(0,t) = \frac{\prmprm{f}(0)}{2 a}
\nonumber\\
\kappa_3(\dot\alpha_\mathrm{SS}) &=& \lim_{t\to\infty} \frac{\partial^3 \mathcal{C}_{\dot{r}}}{\partial r^3}(0,t) = \frac{2 a f^{(3)}(0)}{3 (2 a^2+b^2)}
\text{,}\end{eqnarray}
where $\prm{f}$, $\prmprm{f}$, $f^{(3)}$ denote the derivatives of the elementary
cumulant-generating function. For the stochastic noise, defined by \Eq{epsilon},
we have
$$f(\tilde{R}) = A^2 \tilde{R} / 2 - \ln( 1 - B \tilde{R} )/\tau \text{,}$$
which, due to \Eq{cum}, renders the following cumulants of $\alpha_\mathrm{SS}$
\begin{eqnarray}\EQ{kappas}
\nonumber\\
&\kappa_3&(\alpha_\mathrm{SS}) = \frac{4 B^3}{3 b^2 \tau (2 a^2+b^2)}\text{.}\\\nonumber
\end{eqnarray}

\SEC{crr}{Time autocorrelation function of the second-order fluctuation dynamics}

In this section we prove \Eq{crr} by using the results of Appendix~\ref{sec:cmf}.
For convenience we adopt a subscript notation $\alpha_t = \alpha(t)$. Consider the
joint steady-state probability function of $\alpha_0$ and $\alpha_t$, which can
be written as:
\begin{equation}\EQ{prb}
p_{0,t}(\alpha_0, \alpha_t) = \int d\dot{\alpha}_0 p_t(\alpha_t|\alpha_0,\dot{\alpha}_0) p(\alpha_0,\dot{\alpha}_0)\text{,}
\end{equation}
where $p(\cdot,\cdot)$ is the joint steady-state probability density of $\alpha_\mathrm{SS}$
and $\dot\alpha_\mathrm{SS}$, while $p_t(\alpha_t|\alpha_0,\dot\alpha_0)$ is the
transition probability to the state $\alpha_t$, conditional on some initial state
$\alpha_0,\,\dot\alpha_0$. But, due to \Eq{alpha}, we have
\begin{equation}\EQ{ptr}
p_t(\alpha_t|\alpha_0,\dot\alpha_0) = p_r[\alpha_t - \alpha_0 c(t) + \dot\alpha_0 \dot{c}(t) / b^2]\text{,}
\end{equation}
where $p_r(\cdot)$ is the probability density of $r(t)$ from \Eq{rdr}, which corresponds
to the cumulant-generating function $\mathcal{C}_r(\cdot,t)$ from \Eq{cmfr}.

By using Eqs.~(\ref{eq:prb} and \ref{eq:ptr}), for the probability distribution $p_{0,t}(\cdot,\cdot)$
we obtain the following joint cumulant-generating function:
\begin{widetext}
\begin{eqnarray}\EQ{joint}
\mathcal{C}_{0,t}(\alpha_0,\alpha_t) &=& \ln \iint d\alpha_0 d\alpha_t \exp(\alpha_0 \tilde\alpha_0 + \alpha_t \tilde\alpha_t) p_{0,t}(\alpha_0, \alpha_t)
\nonumber\\
&=& \ln \iiint d\dot\alpha_0 d\alpha_0 d\alpha_t \exp(\alpha_0 \tilde\alpha_0 + \alpha_t \tilde\alpha_t)
    p(\alpha_0, \dot\alpha_0) p_r[\alpha_t - \alpha_0 c(t) + \dot\alpha_0 \dot{c}(t) / b^2]
\nonumber\\
&=& \mathcal{C}_r(\tilde\alpha_t, t) + \ln \iint d\dot\alpha_0 d\alpha_0
    \exp\{\alpha_0 [\tilde\alpha_0 + \tilde\alpha_t c(t)] - \dot\alpha_0 \tilde\alpha_t \dot{c}(t) / b^2\}
    p(\alpha_0, \dot\alpha_0)
\nonumber\\
&=& \mathcal{C}_r(\tilde\alpha_t, t) + \lim_{\prm{t}\to\infty}
    \mathcal{C}_{r\dot{r}}[\tilde\alpha_0 + \tilde\alpha_t c(t), -\tilde\alpha_t \dot{c}(t) / b^2, \prm{t}]
\text{,}
\end{eqnarray}
where $\mathcal{C}_{r\dot{r}}(\cdot,\cdot,t)$ is given by \Eq{cmft}. The normalized autocorrelation
function $c_\alpha(t)$ follows from Eqs.~(\ref{eq:cmft} and \ref{eq:joint}):
\begin{eqnarray}
c_\alpha(t) &=& \frac{1}{\kappa_2(\alpha_\mathrm{SS})} \frac{\partial^2 \mathcal{C}_{0,t}}{\partial\alpha_t \partial\alpha_0}(0,0)
  = \frac{1}{\kappa_2(\alpha_\mathrm{SS})} \lim_{\prm{t}\to\infty}
    \frac{\partial^2 \mathcal{C}_{r\dot{r}}[\tilde\alpha_0 + \tilde\alpha_t c(t), -\tilde\alpha_t \dot{c}(t) / b^2, \prm{t}]}{
      \partial\alpha_t \partial\alpha_0}\Biggr|_{\substack{\alpha_0=0\\\alpha_t=0}}
  \nonumber\\
  &=& \frac{c(t)\prmprm{f}(0)}{\kappa_2(\alpha_\mathrm{SS})} \lim_{\prm{t}\to\infty} \left[ \int_0^{\prm{t}} ds \phi(\prm{t}-s)^2\right]
    + \frac{\dot{c}(t)\prmprm{f}(0)}{b^2 \kappa_2(\alpha_\mathrm{SS})} \lim_{\prm{t}\to\infty} \left[\int_0^{\prm{t}} ds \phi(\prm{t}-s)\dot\phi(\prm{t}-s)\right]
    = c(t){,}
\end{eqnarray}
\end{widetext}
which concludes our proof of \Eq{crr}.

\SEC{cmp}{Details of the computational model}

In our molecular dynamics simulations, we integrated numerically equations of motion
in $D=3$ dimensions for a system of $N=10000$ particles, which were interacting through
the Weeks-Chandler-Anderson potential \cite{WCA}:
$$
U_\mathrm{WCA}(r) = \begin{cases}
4 \epsilon \left[
  (\frac{\sigma}{r})^{12} - (\frac{\sigma}{r})^{6}
\right]\text{, if } r < 2^{1/6}\sigma \\
0\text{, if } r \ge 2^{1/6}\sigma
\end{cases}\text{,}
$$
where $r$ is the interparticle distance, $\epsilon$ and $\sigma$ are constants of
the potential energy and its range,respectively.

The system was subject to moving periodic boundary conditions, which impose a constant
shear rate $\gamma$ \cite[Chapter 6]{EvansMorriss}, and coupled to the Nos\'{e}-Hoover
(NH) thermostat \cite[Chapter 6]{FrenkelSmit} of the relaxation time constant $\theta$.
The resulting thermostatted SLLOD \cite[Chapter 6]{EvansMorriss} equations of motion
read:
\begin{eqnarray}\EQ{SLLOD}
\dot{\pnt{q}}_i &=& \pnt{p}_i / m + \gamma q_{i y} \pnt{X} \nonumber\\
\dot{\pnt{p}}_i &=& \pnt{F}_i(\pnt{q}_i) - \gamma p_{i y} \pnt{X} - \zeta \pnt{p}_i \nonumber\\
\dot{\zeta} &=& \theta^{-2} \sum_{i=1}^{N} (\frac{\pnt{p}_i^2}{m D N k_B T} - 1)\text{.}
\end{eqnarray}
Here $\pnt{X}$ is a unit vector along the Cartesian coordinate axis $X$ and $\pnt{q}_i$
is the position of $i$-th particle ($q_{i y}$ being its $Y$-coordinate); all particles
have equal mass $m$; $\pnt{p}_i$ is the {\it peculiar} linear momentum of the $i$-th
particle ($p_{i y}$ being its $Y$-component); $\pnt{F}_i$ is the force on the $i$-th
particle due to the interactions with all the other particles; $\zeta$ is the coupling
to the NH reservoir at temperature $T$, while $k_B$ is the Boltzmann constant. The
pressure tensor components are calculated by the following formula:
$$P_{xy} = V^{-1}\sum_{i=1}^{N} (p_{ix} p_{iy}/m + F_{ix} q_{iy})\text{,}$$
and similarly for $P_{yz}$ and $P_{zy}$.

The results of \Sec{sim} are reported in simulation units, reduced by the energy
constant $\epsilon$, the length constant $\sigma$, the mass constant $m$ and the
time constant $\theta$. Invariant parameters of our computational experiments were
the temperature of the NH thermostat $k_B T = 1$ and the number density $0.8$.

The numerical integration of \Eq{SLLOD} was performed with a time step $\Delta{t}=10^{-3}$
by using an optimized version of the symplectic operator-splitting method, proposed
in \Ref{MartynaTuckermanII}. In more detail, we consider an evolution operator, acting
on the extended phase space of points $\pnt{\Gamma} = (\pnt{q}_{1..N},\pnt{p}_{1..N},\zeta)$, so that
$$\pnt{\Gamma}(t) = \exp(\mathrm{i} \mathcal{L} t) \pnt{\Gamma}(0)\text{,}$$
where $\mathrm{i} \mathcal{L} = \dot{\pnt\Gamma}\cdot\nabla_{\pnt\Gamma}$ is the
Liouville operator for \Eq{SLLOD}. The Liouvillian can be split by using the following
operatorial sum:
\begin{eqnarray}
\mathcal{L} &=& \mathcal{L}_{p\gamma} + \mathcal{L}_{p} + \mathcal{L}_{p\zeta} + \mathcal{L}_{q\gamma} + \mathcal{L}_{q\zeta}
\nonumber\\
\mathrm{i} \mathcal{L}_{q\zeta} &=& \frac{\pnt{p}}{m} \cdot \partial_{\pnt{q}} + \dot{\zeta} \partial_\zeta
\nonumber\\
\mathrm{i} \mathcal{L}_{q\gamma} &=& \gamma \pnt{q}_y \cdot \frac{\partial}{\partial\pnt{q}_x}
\nonumber\\
\mathrm{i} \mathcal{L}_{p\zeta} &=& -\zeta \pnt{p} \cdot \partial_{\pnt{p}}
\nonumber\\
\mathrm{i} \mathcal{L}_p &=& \pnt{F}(\pnt{q}) \cdot \partial_{\pnt{p}}
\nonumber\\
\mathrm{i} \mathcal{L}_{p\gamma} &=& - \gamma \pnt{p}_y \cdot \frac{\partial}{\partial\pnt{p}_x}
\text{,}
\end{eqnarray}
where the operators $\partial_{\pnt{q}}$, $\partial_{\pnt{p}}$ etc. act on the respective
subspaces of positions $\pnt{q}$, momenta $\pnt{p}$ etc. in $\pnt{\Gamma}$. The evolution
operator is then approximated by
\begin{widetext}
\begin{eqnarray}\EQ{alg}
\exp[\mathrm{i} \mathcal{L} t + \mathrm{O}(t^2)] &=& \prod_{j=1}^{t/\Delta{t}}
  \exp\left(\frac{\mathrm{i} \mathcal{L}_{p\gamma}\Delta{t}}{2}\right)
  \exp\left(\frac{\mathrm{i} \mathcal{L}_{p} \Delta{t}}{2}\right)
  \exp\left(\frac{\mathrm{i} \mathcal{L}_{p\zeta} \Delta{t}}{2}\right)
  \exp\left(\frac{\mathrm{i} \mathcal{L}_{q\gamma} \Delta{t}}{2}\right)
  \times \nonumber\\
  &\times&\exp(\mathrm{i} \mathcal{L}_{q\zeta} \Delta{t})
  \exp\left(\frac{\mathrm{i} \mathcal{L}_{q\gamma} \Delta{t}}{2}\right)
  \exp\left(\frac{\mathrm{i} \mathcal{L}_{p\zeta} \Delta{t}}{2}\right)
  \exp\left(\frac{\mathrm{i} \mathcal{L}_{p\gamma} \Delta{t}}{2}\right)
  \exp\left(\frac{\mathrm{i} \mathcal{L}_{p} \Delta{t}}{2}\right)
\text{.}\end{eqnarray}
\end{widetext}
The decomposition of the evolution operator in \Eq{alg} determines the sequence of
steps in our symplectic integrator, as described in Refs~\cite[Appendix E]{MartynaTuckermanI,MartynaTuckermanII,FrenkelSmit}.

Before collecting simulation data, the initially generated phase space configurations
were evolved for a time interval of $10^5$ reduced units. Then, in order to calculate
the autocorrelation functions of interest, we measured the pressure tensor components
at $10^5$ consecutive integration steps.

\bibliographystyle{apsrev4-1}
\bibliography{References}

\end{document}